\newcommand{\ignore}[1]{}
\newtheorem{Theorem}{Theorem}[section]
\newtheorem{Remark}{Remark}[section]
\providecommand{\keywords}[1]
{
  \small	
  \textit{Keywords:} #1
}
\providecommand{\JELcodes}[1]
{
  \small	
  \textit{JEL Codes:} #1
}
\title{GARCHX-NoVaS: A Model-free Approach to Incorporate Exogenous Variables}
\author[1]{Kejin Wu}
\author[2]{Sayar Karmakar}
\author[3]{Rangan Gupta}
\affil[1]{Department of Mathematics, University of California San Diego}
\affil[2]{Department of Statistics, University of Florida}
\affil[3]{Department of Economics, University of Pretoria}
\date{}
\begin{document}

\maketitle

\begin{abstract}
In this work, we explore the forecasting ability of a recently proposed normalizing and variance-stabilizing (NoVaS) transformation with the possible inclusion of exogenous variables. From an applied point-of-view, extra knowledge such as fundamentals- and sentiments-based information could be beneficial to improve the prediction accuracy of market volatility if they are incorporated into the forecasting process. In the classical approach, these models including exogenous variables are typically termed GARCHX-type models. Being a Model-free prediction method, NoVaS has generally shown more accurate, stable and robust (to misspecifications) performance than that compared to classical GARCH-type methods. This motivates us to extend this framework to the GARCHX forecasting as well. We derive the NoVaS transformation needed to include exogenous covariates and then construct the corresponding prediction procedure. We show through extensive simulation studies that bolster our claim that the NoVaS method outperforms traditional ones, especially for long-term time aggregated predictions. We also provide an interesting data analysis to exhibit how our method could possibly shed light on the role of geopolitical risks in forecasting volatility in national stock market indices for three different countries in Europe.
\end{abstract}

\keywords{ Volatility forecasting; Model-free prediction; GARCH; GARCHX}

\JELcodes{C32; C53; C63; Q54}

\section{Introduction}\label{sec:introduction}
In the long history of time series econometrics literature, accurate forecasting has always stood out as a fundamental and important problem. It has a range of applications in various industries, e.g., weather forecasting, climate forecasting, and economic forecasting. Discrete-time series data, e.g., heights of ocean tides, and temperature of a city, is the realization of a stochastic process $\{X_t,t\in \mathbb{Z}\}$. The earliest modern time series analysis could be traced back to the work of \cite{yule1927vii} where the pattern of the sunspots number was studied. Unlike the prediction of independent data, the prediction of time series gets more complicated due to the inherent data dependence. To get accurate predictions and inferences, it is crucial to model the dependent relationship within the data. Usually, very generally speaking, the time series data is assumed to be generated by some underlying mechanism as follows: 
    \begin{equation}\label{generalmodel}
        X_{t} = G(\bm{X}_{t-p},\epsilon_{t});
    \end{equation}
$G(\cdot,\cdot)$ could be any suitable function; $\epsilon_t$ is called innovation and assumed to be $i.i.d.$ with appropriate moments and independent with $X_{t-i}$, $i\geq 1$; $\bm{X}_{t-p}$ represents $\{X_{t-1},\ldots,X_{t-p}\}$ and stands for the historical information. To further simplify the forecasting problem, participators focus on some standard formats of $G(\cdot,\cdot)$, e.g., linear or non-linear. For linear models, such as linear AR, MA and ARMA models, we can apply the Box-Jenkins method of identifying, fitting, checking and predicting models systematically \citep{box1976time}. However, the prediction of non-linear models is not as trivial as the case of linear models since the innovation must be appropriately included in the prediction process, especially for the multi-step ahead predictions; see \cite{wu2023bootstrap} for more related discussions.  

In this paper, we are exclusively interested in one non-linear type of \cref{generalmodel} which is the so-called Generalized Auto-Regressive Conditional Heteroskedasticity (GARCH) model proposed by \cite{bollerslev1986generalized} and has a form below:  
\begin{equation}
    \begin{split}
        Y_t &= \sigma_tW_t\\
        \sigma_t^2&=a + a_1Y_{t-1}^2 + b_1\sigma_{t-1}^2;\label{Eq:garch}
    \end{split}
\end{equation}
where, $a \geq 0$, $a_1 > 0$, $b_1 > 0$, and $W_t\sim i.i.d.~N(0,1)$. The GARCH model is a generalization of the famous Autoregressive Conditional Heteroskedasticity (ARCH) model proposed by \cite{engle1982autoregressive}. Its ability to forecast the absolute magnitude and quantiles or entire density of squared financial log-returns (i.e., equivalent to volatility forecasting to some extent) was shown by \citet{articleengle2001} using the Dow Jones Industrial Index. Later, many studies to investigate the performance of different GARCH-type models in predicting volatility of financial series were conducted; see following references \citep{peters2001estimating,gonzalez2004forecasting,lim2013comparing, herrera2018forecasting,karmakar2020bayesian}. For ARCH/GARCH-type models, it is usual practice to identify and fit models based on quasi-maximum likelihood inference. 

Traditionally, economists primarily utilize univariate GARCH-family models to understand dynamics of econometric data such as stock/ index/ price, etc observed for a long time. However, one of the key focuses of financial econometrics is to understand how extra knowledge such as fundamentals- and sentiments-based information could be beneficial to improve the prediction accuracy of market volatility if they are incorporated into the forecasting process; see more discussion from \cite{engle2007good} and the references therein \footnote{In this regard, there is also a large literature that involves incorporating information of low-frequency variables using the GARCH-Mixed Data Sampling (MIDAS) model, as originally developed by \cite{engle2013stock}.}. In line with the classical GARCH methods, we can wrap the exogenous covariates into the prediction process to get the GARCH models augmented with additional explanatory
variables, GARCHX models in short. The estimation methodology of GARCHX models was discussed thoroughly in the work of \cite{francq2019qml}; see more details about the GARCHX model in \cref{Subsec:GARCHX}.

Rather than taking the traditional approach (i.e., specifying and fitting a model and then predicting), we consider a recently proposed model-free prediction idea, namely normalizing and variance-stabilizing transformation (NoVaS transformation). The NoVaS method was initially developed by \cite{politis2003normalizing} and then well discussed under the framework of the Model-free prediction principle in \cite{politis2015modelfreepredictionprinciple}. In short, the Model-free prediction principle hinges on the idea of applying an inverse transformation function to bridge two equivalent probability spaces. For example, if we observe a univariate time series $\{Y_1,\ldots,Y_T\}$, we can try to find a transformation to map $\{Y_1,\ldots,Y_T\}$ to an $i.i.d.$ series $\{Z_1,\ldots,Z_T\}$. Since the prediction of $i.i.d.$ data is trivial, we can then transform the prediction of $i.i.d.$ data back to the prediction of the original data; see more details about the Model-free prediction principle in \cref{Subsec:modelfreeidea}. 

Following the literature, we call this transformation-based approach a Model-free method in this paper. This necessitates us to emphasize that although the transformation is inspired from a model-assumption, for the prediction step it does not tie with any specific model structure. In the huge literature of applied econometrics, while analyzing data observed over a long time that can show signs of heteroscedasticity, the usual practice is to pick some specific GARCH-type model. Next an estimation of that model is carried out accordingly and subsequently the forecast of future volatility will be made based on the estimated model. Apparently, there is no universal rule for the choice of the specific GARCH model. In other words, a specific GARCH model can not work uniformly well across different datasets compared to other variants. On the other hand, the Model-free prediction could work well for any scenario as long as a suitable transformation function can be found. Therefore, the model-selection stage is shunned and the Model-free prediction approach is more robust against the model misspecification. Moreover, the standard GARCH methods require a relatively large sample size to be estimated well. For the NoVaS method, it tends to work stably even with short data. The existence of such transformation function in the context of predicting with exogenous variables will be analyzed theoretically in \cref{Subsec:modelfreeidea}.

Empirically speaking, NoVaS methods were mainly applied to forecast volatility in financial econometrics in the past few years. \citet{gulay2018comparison} showed that the NoVaS method could beat GARCH-type models (GARCH, EGARCH and GJR-GARCH) with generalized error distributions by comparing the pseudo-out of sample (POOS) forecasting of volatility. Here the POOS forecasting analysis means using data up to and including the current time to predict future values. Later, \citet{chen2019optimal} extended the NoVaS method to do multi-step ahead predictions. \cite{wu2021model} further substantiated the great performance of NoVaS methods on time-aggregated long-term (30-steps ahead) predictions. \cite{wang2022model} applied the Model-free idea to provide estimation and prediction inference for a general class of time series. Our present work is motivated by the \cite{wu2023model} work where the authors recommended a so-called GARCH-NoVaS (GA-NoVaS) transformation structure inspired by the development of GARCH from ARCH. This NoVaS method is significantly robust against different model misspecification. Given this, it was a natural and probably quite an important question to see if such a robust forecasting framework can be built where exogenous covariates can be included and thus improve forecasting accuracy.

In this work, we explore the new methodology of forecasting stock market volatility with additional covariates being available to be included in the volatility dynamics. As far as we know, the NoVaS model-free prediction idea has not been studied when the exogenous variables are featured even in modeling the mean or average let alone the more complicated variance or volatility dynamics of a time-series. Due to the superior performance of the GARCH-NoVaS method in volatility forecasting, for this paper, we stick to the variance part and attempt to further boost the ability of the GA-NoVaS method with the help of exogenous covariate information. Towards this, we propose a so-called GARCHX-NoVaS (abbreviated as GAX-NoVaS henceforth) method which takes the GARCHX model as the starting step to build transformation. To obtain the inference about the future situation at an overall level, we choose the time-aggregated prediction metric. This aggregated metric has been applied to evaluate future predictions of electricity price or financial data \citep{fryzlewicz2008normalized,chudy2020long, karmakar2022long}; see the formal definition in \cref{Subsec:timeaggregatedmetric}. We wanted to check if the NoVaS prediction method can incorporate exogenous variables. More importantly, we hope the GAX-NoVaS method can sustain its great performance compared to the GARCHX method. 

In addition to comparing our model-free method and classical GARCHX model with several simulated datasets, we also provide an interesting real data analysis. Our goal is to exhibit how our method could possibly shed light on the role of geopolitical risks, which are currently engulfing the global economy with multiple wars taking place, in forecasting the volatility of three stock markets of Europe: Germany and its two neighbors (Austria and Switzerland), based on a daily index of uncertainty associated with the Russia-Ukraine war as perceived by German Twitter activity. We also use newspaper-based metrics of global geopolitical risks due to acts and threats, as developed by \cite{caldara2022}, to check for the robustness of our result covering a longer data sample. Hence, we add from a methodological perspective to the existing literature on forecasting international stock returns volatility using the information contained in geopolitical events and threats that basically rely on GARCH-type models (see, for example, \cite{Salisu2022,zhang2023} for details discussion of this literature). In this regard, note that, \cite{caldara2022} pointed out that entrepreneurs, market participants, and central bank officials view geopolitical risks as key determinants of investment decisions and stock market dynamics, with such risks, along with economic and policy uncertainties, forming an ``uncertainty trinity” that would adversely impact the economy and the financial sector, as has been traditionally reported in the large existing literature on the impact of terror attacks and threats \cite{balcilar2018,bouras2018,bourforthcoming}. In our empirical and simulation exercises, we measure the performance of GARCHX and GAX-NoVaS methods by the standard mean square prediction error. Moreover, we apply the forecast comparison tests to compare the two methods in a statistical way.

Our main contributions are summarized as follows: 
\begin{itemize}
    \item We propose a new methodology-- namely GAX-NoVaS--to do the volatility forecasting with exogenous variables. This model-free method depends on a transformation function to connect two equivalent probability spaces instead of relying on any model assumption. The idea behind the GAX-NoVaS method hinges on the Model-free prediction principle. 
    
    \item We show such a transformation function exists under some mild conditions. This serves as the theoretical foundation of our method.

    \item We apply our new method and standard GARCHX model to investigate the role of geopolitical risks in forecasting volatility. It turns out that our new method can be significantly more accurate, especially for long-horizon time aggregated predictions.
\end{itemize}

We organize the remainder of this article as follows. In \cref{Sec:GARCHXandMF}, we review the classical forecasting model, namely GARCHX, which is used as the starting point to propose the GAX-NoVaS method. Also, we present more details of the Model-free prediction principle and prove the existence of a transformation function with some exogenous variables existing. In \cref{Sec:NoVaSGARCHX}, we delineate the details of proposed GAX-NoVaS method. Then, some simulation studies and model evaluation criteria are collated in \cref{Sec:Simulations}. Next, we contrast our methods to existing classical ones on three empirical datasets in \cref{Sec:empiricalstudies}.  Finally, in \cref{Sec:Conclusion}, we conclude by discussing the implications of our findings and some future directions.

\section{GARCHX estimation and Model-free prediction principle}\label{Sec:GARCHXandMF}
Before introducing our GAX-NoVaS method, we first explain the GARCHX model since the transformation of GAX-NoVaS is based on the GARCHX model. In addition, we give more details on the Model-free prediction principle and we prove the existence of a transformation function to achieve the Model-free prediction goal.

\subsection{GARCHX model}\label{Subsec:GARCHX}
In a seminal work, ARCH was proposed by \cite{engle1982autoregressive} to model volatility or $\sigma_t^2$ for a time-series in a dynamic way. Following this, many different variants were developed in the econometrics literature. The GARCH model, especially the GARCH(1,1), stands as possibly the most popular one. The classic GARCH(1,1) model as defined by \cite{bollerslev1986generalized} can be described below:
\begin{equation}\label{Eq:GARCHmodel}
    \begin{split}
        Y_t &= \sigma_tW_t,\\
        \sigma_t^2&=a + a_1Y_{t-1}^2 + b_1\sigma_{t-1}^2;
    \end{split}
\end{equation}
where $a \geq 0$, $a_1 > 0$, $b_1 > 0$, and $W_t\sim i.i.d.~N(0,1)$. More generally, we can express the GARCH-type models as 
\begin{equation*}
    Y_t = \sigma_t \eta_t;
\end{equation*}
where a particular distribution of $\eta_t$ is not necessary and we can only assume that $\mathbb{E}(\eta_t^2|\mathcal{F}_{t-1}) = 1$. Usually, $\mathcal{F}_{t-1}$ is taken as the sigma-field generated by previous information $\{Y_{t-1},\ldots\}$. When additional information is available, people would like to utilize this extra knowledge to improve prediction accuracy. Subsequently, the so-called GARCHX model enters the public eye; see \cite{francq2019qml} for discussions on the quasi-maximum likelihood estimation inference of GARCHX models. To simplify the analysis, we still assume the normality of $W_t$. After taking a vector of exogenous covariates $\bm{X} = (X_{1},\ldots,X_{m})$ into account, we can wrap the exogenous covariates into the prediction process by turning the GARCH(1,1) model into the following GARCHX(1,1,1) model:
\begin{equation}\label{Eq:GARCHXmodel}
    \begin{split}
        Y_t &= \sigma_tW_t,\\
        \sigma_t^2&=a + a_1Y_{t-1}^2 + b_1\sigma_{t-1}^2 + \bm{c}^T\bm{X}_{t-1};
    \end{split}
\end{equation}
where $\bm{X}_{t-1}$ represents $(X_{1,t-1},\ldots,X_{m,t-1})$ and $\bm{c}$ are the coefficients of these exogenous variables to be estimated. To perform a moving-window out-of-sample prediction experiment, we first need to estimate the GARCH(1,1) and GARCHX(1,1,1) models\footnote{For estimation of the GARCH and GARCHX models, we use the \textit{fGarch} \citep{wuertz2013package} and \textit{garchx} packages \citep{sucarrat2020garchx} in the \textit{R} language and environment \citep{Rlanguage}.}, and then we compute predictions iteratively; see \cref{Sec:Simulations} for details. In this process, we assume that we know the true exogenous variables, which is feasible because we generate out-of-sample predictions. For practical applications, if needed, the future exogenous information can be estimated separately. 

\subsection{Model-free prediction principle}\label{Subsec:modelfreeidea}
The model-free prediction principle was initially well developed by \cite{politis2015modelfreepredictionprinciple}. Later, \cite{chen2019optimal} applied this idea to multi-step ahead predictions of financial returns in the context of an ARCH-model structure. In short, the main idea behind the model-free prediction is to apply an invertible transformation function, $H_T$, that can map a non-$i.i.d.$ vector, $\{Y_t~;t = 1,\ldots,T\}$, to a vector, $\{\epsilon_t;~t=1,\ldots,T\}$, with $i.i.d.$ components (chosen as standard normal in this work). Due to the invertibility of the function, $H_T$, it is possible to construct a one-to-one relationship between a future value, $Y_{T+1}$, and $\epsilon_{T+1}$, i.e.,
\begin{equation}
    Y_{T+1}=f_{T+1}(\bm{Y}_{T}, \bm{X}_{T+1},\epsilon_{T+1});
\end{equation}
where $\bm{Y}_{T}$ denotes all historical data $\{Y_t;~t =1,\ldots,T\}$; $\bm{X}_{T+1}$ is the collection of all predictors, and it also contains the value of a future predictor $X_{T+1}$; the form of $f_{T+1}(\cdot)$ depends on $H^{-1}_{T}$. This relationship implies that we can also transform the prediction of $\epsilon_{T+1}$ to the prediction of $Y_{T+1}$. Assume we have $\hat{\epsilon}_{T+1}$ to the be the predictor of $\epsilon_{T+1}$, we can express the predictor of $Y_{T+1}$ as
\begin{equation}
     \widehat{Y}_{T+1}=f_{T+1}(\bm{Y}_{T}, \bm{X}_{T+1},\hat{\epsilon}_{T+1}).
\end{equation}
Because the prediction of $i.i.d.$ data is standard, the $L_1$ (Mean Absolute Deviation) , $L_2$ (Mean Squared Error), or another optimal quantile predictor of $\epsilon_{T+1}$ can easily be found. We, thus, can easily obtain the corresponding optimal predictor of $Y_{T+1}$. 

For multi-step ($h$-step) ahead prediction, we simply repeat this prediction process, i.e., we express $Y_{T+h}$ through a function w.r.t. $\bm{Y}_{T}$, $\bm{X}_{T+1}$ and $\{\epsilon_{T+1},\ldots,\epsilon_{T+h}\}$: 
\begin{equation}\label{Eq:krelationship}
    Y_{T+h}=f_{T+h}(\bm{Y}_{T}, \bm{X}_{T+1},\epsilon_{T+1},\ldots,\epsilon_{T+h}). 
\end{equation}
In order to compute the prediction of $Y_{T+h}$, we take a distribution-match approach to approximate the distribution of $Y_{T+h}$. Ideally, when we know the exact distribution of the $i.i.d.$ $\epsilon$, we can use a Monte Carlo simulation to approximate the distribution of $Y_{T+h}$ based on \cref{Eq:krelationship}. Practically speaking, when we just have the empirical transformation results, i.e., the observed sample $\{\epsilon_t\}_{t=1}^{T}$, bootstrap is an appropriate approach. Moreover, we can even predict $g(Y_{T+h})$, where $g(\cdot)$ is a general continuous function. For example, we can compute the $L_1$ and $L_2$ optimal predictors of $g(Y_{T+h})$ as below:
\begin{equation}
\begin{split}
    g(Y_{T+h})_{L_2} &= \frac{1}{M}\sum_{m=1}^Mg(f_{T+h}(\bm{Y}_T,\bm{X}_{T+1},\hat{\epsilon}_{T+1,m},\ldots,\hat{\epsilon}_{T+h,m})),\\
    g(Y_{T+h})_{L_1} &= \text{Median~of~}\{g(f_{T+h}(\bm{Y}_T,\bm{X}_{T+1},\hat{\epsilon}_{T+1,m},\ldots,\hat{\epsilon}_{T+h,m}));m = 1,\ldots,M\};
    \end{split}
\end{equation}
where $g(Y_{T+h})_{L_2}$ and $g(Y_{T+h})_{L_1}$ represent the optimal $L_2$ and $L_1$ predictor of $g(Y_{T+1})$, the $\{\hat{\epsilon}_{T+1,m}\}_{m=1}^{M}$ are generated by bootstrap or Monte Carlo simulation, and $M$ is some large number (2000 in our empirical analysis). For further discussion, see \citet{politis2015modelfreepredictionprinciple}. 

To the best of our knowledge, the model-free prediction idea has not been studied when the model features exogenous variables. However, it may be beneficial to take into account in the prediction process the additional information embedded in such exogenous variables. To show the NoVaS approach is still applicable, we need a transformation function that maps the targeted variables and exogenous predictors together into some simple $i.i.d. $ random variables. Under some mild conditions, we show the existence of such a transformation function based on the probability integral transform. We assume:
\begin{itemize}
    \item A1 The joint density of $\{Y_1,\cdots, Y_T \}$ exists for any $T\geq 1$.
    \item A2 For exogenous random vector $\bm{X}: = \{X_1,\cdots, X_{m}\}$, the joint density $\{Y_1,\cdots, Y_T, X_1, \cdots, X_m \}$ exists for any $m\geq 1$. 
\end{itemize}
Then, the feasibility of NoVaS transformation with exogenous variables existing is guaranteed by \cref{Theorem:existencetransformation} shown below:

\begin{Theorem}\label{Theorem:existencetransformation}
Under A1 and A2, there exists a function $\bm{g}$ such that $\bm{Z} = \bm{g}((\bm{Y},\bm{X}))$ and the corresponding inverse function $\bm{h}$ such that $(\widetilde{\bm{Y}},\widetilde{\bm{X}}) = \bm{h}(\bm{Z})$; $\bm{Z} \sim N(0,\bm{I}_{T+m})$; $\bm{Y} = (Y_1,\cdots,Y_T)$ and $\bm{X} = (X_1,\cdots,X_m)$ are any two random vectors; $(\widetilde{\bm{Y}},\widetilde{\bm{X}})$ have the same joint distribution of $(\bm{Y},\bm{X})$. 
\end{Theorem}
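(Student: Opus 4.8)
The plan is to exhibit $\bm{g}$ as the composition of two classical maps: the Rosenblatt transform, which turns any absolutely continuous random vector into a vector of i.i.d.\ uniforms, followed by the componentwise standard-normal quantile function. First I would stack the two vectors into a single random vector $\bm{V} = (Y_1,\ldots,Y_T,X_1,\ldots,X_m) \in \mathbb{R}^{d}$ with $d = T+m$. Assumptions A1 and A2 together guarantee that $\bm{V}$ admits a joint density $f_{\bm{V}}$, hence (by marginalizing and taking ratios) that all conditional densities $f_{k\mid 1,\ldots,k-1}$ and the associated conditional CDFs $F_{k\mid 1,\ldots,k-1}$ are well defined for $k=1,\ldots,d$. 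Note that no time-series structure is used here, which matches the stated generality ($\bm{Y}$ and $\bm{X}$ arbitrary).

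Next I would define the Rosenblatt map $\bm{r}(\bm{v}) = (u_1,\ldots,u_d)$ by $u_1 = F_1(v_1)$ and $u_k = F_{k\mid 1,\ldots,k-1}(v_k \mid v_1,\ldots,v_{k-1})$ for $k\ge 2$. Conditioning successively and using that $F(V\mid\cdot)$ is $\mathrm{Unif}(0,1)$ for a continuous conditional law shows that the coordinates of $\bm{U} = \bm{r}(\bm{V})$ are i.i.d.\ $\mathrm{Unif}(0,1)$. Setting $\bm{g} = (\Phi^{-1},\ldots,\Phi^{-1})\circ\bm{r}$, i.e.\ $Z_k = \Phi^{-1}(U_k)$, then yields $\bm{Z} = \bm{g}(\bm{V}) \sim N(0,\bm{I}_d)$, since applying $\Phi^{-1}$ to independent uniforms produces independent standard normals; reading off the first $T$ coordinates as the ``$\bm{Y}$-part'' and the last $m$ as the ``$\bm{X}$-part'' gives the stated conclusion. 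For the inverse I would run this backwards: given $\bm{z}$, set $u_k = \Phi(z_k)$, then recover $v_1 = F_1^{-1}(u_1)$ and recursively $v_k = F_{k\mid 1,\ldots,k-1}^{-1}(u_k \mid v_1,\ldots,v_{k-1})$, where $F^{-1}(u) = \inf\{x : F(x)\ge u\}$ is the quantile inverse; call this map $\bm{h}$. Each recursive step reconstructs a draw from the correct conditional law, so $\bm{h}$ pushes $N(0,\bm{I}_d)$ forward to the law of $\bm{V}$, i.e.\ $(\widetilde{\bm{Y}},\widetilde{\bm{X}}) := \bm{h}(\bm{Z})$ has the same joint distribution as $(\bm{Y},\bm{X})$.

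The delicate point — and the step I expect to require the most care — is the invertibility when the conditional densities vanish on parts of their support, so the conditional CDFs are only nondecreasing rather than strictly increasing. Then $\bm{r}$ need not be a genuine bijection and $\bm{h}\circ\bm{g}$ equals the identity only almost surely, not pointwise. The clean fix is to use the left-continuous quantile inverse above and to phrase the conclusion distributionally (equality in law of $(\widetilde{\bm{Y}},\widetilde{\bm{X}})$ and $(\bm{Y},\bm{X})$), which is exactly what the theorem asserts. I would also remark that measurability of $\bm{g}$ and $\bm{h}$ is automatic, since conditional CDFs and their quantile inverses are jointly measurable in all their arguments, so both maps are legitimate transformations in the sense needed by the Model-free prediction principle of \cref{Subsec:modelfreeidea}.
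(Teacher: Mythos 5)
Your proposal is correct and follows essentially the same route as the paper: the sequential probability integral transform on conditional distributions (the Rosenblatt map) composed with the componentwise standard-normal quantile function, with the inverse recovered recursively via generalized quantile inverses. Your added care about non-strictly-increasing conditional CDFs and the distributional (rather than pointwise) sense of invertibility is a welcome refinement, but it does not change the argument.
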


\begin{proof}
The proof of \cref{Theorem:existencetransformation} is based on the probability integral transform; see \cite{angus1994probability} for a review. Without loss of generality, we start from $Y_1$ to determine the transformation function $\bm{g}$. Let $U_1 := \Tilde{g}_1(Y) = F(Y_1)$; $F(Y_1)$ is the distribution of $Y_1$. According to the probability integral transform, we know $U_1$ has a uniform distribution on $[0,1]$. Then, we make 
\begin{equation}\label{Eq:z2}
    U_2 := \Tilde{g}_2(Y_1,Y_2) = F(Y_2 | Y_1 ).
\end{equation}
$F(Y_2 | Y_1 )$ is the conditional distribution of $Y_2$. \cref{Eq:z2} implies that $Z_2$ is $\text{Uniform}(0,1)$ conditional on $Y_1$. Thus, the unconditional (marginal) distribution of $U_2$ is still $\text{Uniform}(0,1)$, and $U_2$ and $Y_1$ are independent so that $U_2$ is also independent with $U_1$. This can be seen from the equation below:
\begin{equation}
    p_{U_2, Y_1}(u_2,y_1) =  p_{U_2|Y_1}(u_2|y_1) p_{Y_1}(y_1).
\end{equation}
Integrating both sides w.r.t. $y_1$, we can find $p_{U_2}(u_2) = 1$ on the region $[0,1]$, since $Z_2$ is $\text{Uniform}(0,1)$ conditional on $Y_1 = y_1$ for any $y_1$. We can repeat this process as a Gram–Schmidt-like recursion, i.e., we let $U_3 := \Tilde{g}_3(Y_1,Y_2,Y_3) = F(Y_2 | Y_1,Y_2 )$ and so on. In total, we need $\{\Tilde{g}_1,\cdots,\Tilde{g}_{T+m}\}$ and they are functions of $\bm{Y}$. Thus, there exists a function $\Tilde{\bm{g}}$ which maps $(\bm{Y},\bm{X})$ to $\bm{U}$ which has $i.i.d.$ uniform components $\{U_1,\cdots, U_{m+T}\}$, i.e., $\bm{U} = \Tilde{\bm{g}}( (\bm{Y},\bm{X}))$. Then, $\bm{Z} = \Phi^{-1}\circ\Tilde{\bm{g}}((\bm{Y},\bm{X}))$ has multivariate normal distribution $N(0,\bm{I}_{T+m})$; $\Phi^{-1}$ is the quantile function of $N(0,\bm{I}_{T+m})$. Finally, we can take $\bm{g} =  \Phi^{-1}\circ\Tilde{\bm{g}}$.

On the other hand, if $F^{-1}_{Y_1}:=\inf \{x: F_{Y_1}(x) \geq y\}, 0\leq y\leq 1$, then $F_{Y_1}^{-1}(U_1)$ has the distribution as the same as $Y_1$. Similarly, we can get the conditional distribution of $Y_2$ on $Y_1$ by taking $F_{Y_2|Y_1}^{-1}(U_2)$. By repeating this process, we can recover the joint distribution of $(\bm{Y},\bm{X})$ by chain rule. In other words, there exists a $\bm{h}$ such that $(\bm{Y},\bm{X}) = \bm{h}(\bm{Z})$.
\end{proof}

The direct implication of \cref{Theorem:existencetransformation} is that the Model-free prediction principle is feasible even when exogenous variables are included in the dependence dynamics. Moreover, our theorem is more general than the result from \cite{wang2022model} where the time series must satisfy some strict conditions. In short, we build a transformation function based on the GARCHX model structure to estimate the oracle functions $\bm{g}$ and $\bm{h}$, so we call our method GAX-NoVaS. We should mention again that the ``model-free'' in this context means we do not rely on an assumed underlying model to make predictions. Although a transformation function needs to be estimated, it is just a ``bridge'' that connects original and transformed distributions according to the distribution match idea.


\section{GAX-NoVaS model-free prediction method}\label{Sec:NoVaSGARCHX}
We first present the state-of-the-art GARCH-NoVaS method which is based on a so-called NoVaS transformation. Then, we extend the GARCH-NoVaS method to a GAX-NoVaS method, which features the exogenous variables.

\subsection{NoVaS transformation}\label{subSec:NoVaSidea}
For the sake of completeness, we first give a brief introduction to the NoVaS transformation (model) which is a direct application of the Model-free prediction idea explained in \cref{Subsec:modelfreeidea}. Initially, the NoVaS transformation is developed from the ARCH model:
\begin{equation}
    Y_t = W_t\sqrt{a+\sum_{i=1}^pa_iY_{t-i}^2}. \label{3.2e1}
\end{equation}
here, these parameters satisfy $a\geq 0$, $a_i\geq 0$, for all $i = 1,\ldots,p$; $W_t\sim i.i.d.~N(0,1)$. In other words, the structure of the ARCH model gives us a ready-made $H^{-1}_T$. We can express $W_t$ in \cref{3.2e1} using other terms to get a potential $H_{T}$ :
\begin{equation}
    W_t = \frac{Y_t}{\sqrt{a+\sum_{i=1}^pa_iY_{t-i}^2}} ~;~\text{for}~ t=p+1,\ldots,T. \label{3.2e2}
\end{equation}
\citet{politis2003normalizing} further modified \cref{3.2e2} as follows:
\begin{equation}
      W_{t}=\frac{Y_t}{\sqrt{\alpha s_{t-1}^2+\beta Y_t^2+\sum_{i=1}^pa_iY_{t-i}^2}}~;~\text{for}~ t=p+1,\ldots,T. \label{3.2e3}
\end{equation}
here, $\{Y_t;~t=1,\ldots,T\}$ is the sample data; $\{W_{t};~t=p+1,\ldots,T\}$ is the transformed vector; $\alpha$ is a fixed scale invariant constant; $s_{t-1}^2$ is an estimator of the variance of $\{Y_i;~i = 1,\ldots,t-1\}$ and can be calculated by $(t-1)^{-1}\sum_{i=1}^{t-1}(Y_i-\overline{Y})^2$, where $\overline{Y}$ is the sample mean of $\{Y_i;~i = 1,\ldots,t-1\}$. For making \cref{3.2e3} be a qualified function $H_T$, i.e., making $\{W_t\}_{t=p+1}^{T}$ really obey $i.i.d.$ standard normal distribution, we need to impose some restrictions on $\alpha$ and $\beta, a_1,\ldots,a_p$. We first stabilize the variance by requiring:
\begin{equation}
    \alpha\geq0, \beta\geq0, a_i\geq0~;~\text{for all}~i\geq1, \alpha + \beta + \sum_{i=1}^pa_i=1. \label{3.2e4}
\end{equation} 
In application, $\{W_t\}_{t=p+1}^{T}$ transformed from financial log-returns by NoVaS transformation are usually uncorrelated. Therefore, if we make $\{W_t\}_{t=p+1}^{T}$ close to a Gaussian series i.e., normalizing $\{W_t\}_{t=p+1}^{T}$, we can get the desired $i.i.d.$ property. This is why this transformation is called NoVaS.

There are many criteria to measure the normality of a series. Under the observation that the distribution of financial log-returns is usually symmetric, we choose the kurtosis to be a simple distance to measure the departure of a non-skewed dataset from that of the standard normal distribution \citep{politis2015modelfreepredictionprinciple}. Besides, matching marginal distribution seems sufficient to normalize the joint distribution of $\{W_t\}_{t=p+1}^{T}$ for practical purposes based on empirical results. If we denote the marginal distribution of $\{W_t\}_{t=p+1}^{T}$ and the corresponding kurtosis by $\widehat{F}_w$ and $\text{KURT}(W_t)$, respectively, we then attempt to minimize $|\text{KURT}(W_t)-3|$ to obtain the optimal combination of $\alpha,\beta,a_1,\ldots,a_p$ such that $\widehat{F}_w$ is as close to standard normal distribution as possible. Subsequently, the NoVaS transformation can be determined.

The remaining difficulty is how to finish this optimization step to get optimal coefficients $\alpha,\beta,a_1,\ldots,a_p$, especially when $p$ is large. To simplify this problem, \cite{politis2015modelfreepredictionprinciple} defined an exponentially decayed form of $\{a_i\}_{i=1}^p$:
\begin{equation}
    \alpha \neq 0, \beta = c', a_i = c'e^{-ci}~;~\text{for all}~1\leq i\leq p, c' = \frac{1-\alpha}{\sum_{j=0}^pe^{-cj}}.\label{3.2e6}
\end{equation}
The NoVaS transformation based on coefficients defined in \cref{3.2e6} is called Generalized Exponential NoVaS (GE-NoVaS). In other words, we can represent the $p+2$ number of coefficients by two parameters $c$ and $\alpha$, which relieve the optimization burden, but with a sacrifice that the coefficients are fixed in a decayed form. To achieve a balance between the relief of the optimization dilemma and the freedom of coefficients, inspired by the development of GARCH from ARCH, \cite{wu2023model} built a NoVaS transformation according to the GARCH model, namely GARCH-NoVaS which was shown to be more stable and accurate. Later, we specify the GAX-NoVaS transformation in detail.

\subsection{GAX-NoVaS transformation method}
Starting from \cref{Eq:GARCHXmodel}, we take similar steps of building GA-NoVaS to find the transformation function of the GAX-NoVaS method. In order to simplify the notation, we consider the case of only one exogenous covariate $X_t$. The case of multiple exogenous covariates can be analyzed in an analogous way. First, we notice that we can rewrite the \cref{Eq:GARCHXmodel} as
\begin{equation}\label{Eq:start}
    W_t = \frac{Y_t}{\sqrt{a_0 + a_1 Y^2_{t-1} + b_1 \sigma_{t-1}^2 + c_1 X_{t-1} }}.
\end{equation}
We also have
\begin{equation}
\begin{split}
    & \sigma_{t-1}^2 = a_0 + a_1 Y^2_{t-2} + b_1 \sigma_{t-2}^2 + c_1 X_{t-2},\\
    &\sigma_{t-2}^2 = a_0 + a_1 Y^2_{t-3} + b_1 \sigma_{t-3}^2 + c_1 X_{t-3},\\
    \vdots  
\end{split}
\end{equation}
so that we can substitute these terms into \cref{Eq:start}. We then get
\begin{equation}
     W_t = \frac{Y_t}{\sqrt{\frac{a_0}{1-b_1} + \sum_{i=1}^{p}a_1b_1^{i-1} Y^2_{t-i} + \sum_{i=1}^{p}c_1b_1^{i-1} X_{t-i} }},
\end{equation}
where $p$ is a large constant that is used to truncate the infinite summation (we take $p=q$ in this work), because $a_1$ and $b_1$ need to be less than one to guarantee the stationary property of the GARCHX series. In line with the NoVaS transformation, we finally write the transformation function as follows:
\begin{equation}\label{Eq:transfom}
    W_t = \frac{Y_t}{\sqrt{\alpha s_{t-1,Y} + \beta s_{t-1,X}+ \sum_{i=1}^{p}a_1b_1^{i-1} Y^2_{t-i} + \sum_{i=1}^{p}c_1b_1^{i-1} Y_{t-i}}},
\end{equation}
where $s^2_{t-1,Y}$ and $s^2_{t-1,X}$ are the sample variance of $\{Y_1,\ldots,Y_{t-1}\}$ and $\{X_1,\ldots,X_{t-1}\}$, respectively. Thus, we can use \cref{Eq:transfom} as the transformation function for the GARCHX model, where $\{W_t\}$ is the transformed series. We want to make $\{W_t\}$ $i.i.d.$ normal, so that the one-step (conditional) prediction $\widehat{Y}_{T+1}$ can be expressed as
\begin{equation}\label{Eq:pre}
    \widehat{Y}_{T+1} = \widehat{W}_{T+1}\sqrt{\alpha s^2_{T-1,Y} + \beta s^2_{T-1,X}+ \sum_{i=1}^{p}a_1b_1^{i-1} Y^2_{T-i} + \sum_{i=1}^{p}c_1b_1^{i-1} X_{T-i}},
\end{equation}
where $\widehat{W}_{T+1}$ is the optimal point prediction based on $i.i.d.$ $\{W_1,\ldots, W_{T}\}$. Multi-step-ahead predictions can be computed as explained in \cref{Subsec:modelfreeidea}.

The final question left now is how to find a transformation function that indeed makes $\{W_t\}$ $i.i.d.$ normal. While we have made some brief remarks on this question in \cref{subSec:NoVaSidea}, we next provide a full explanation with a focus on the GAX-NoVaS method. Our goal is to determine the coefficients, $\alpha,\beta,a_1,b_1,c_1$, of \cref{Eq:transfom} to obtain the desired transformation. The most important step is to minimize $|\text{KURT}(W_t) - 3|$, where $3$ is the kurtosis of normal distribution. For this optimization, we use the numerical technique to find the optimal coefficients.\footnote{We use the \textit{nloptr} package \citep{ypma2014nloptr} for \textit{R}.} In operation, one may obtain some extremely large values from the transformed series $\{W_t\}$, and such outliers may spoil the normality of the transformed series and may also influence prediction performance. Thus, before moving on to the prediction step, we truncate the transformed series by the 0.99 and 0.01 quantile values of a normal distribution with mean and standard deviation given by the sample mean and sample standard deviation of $\{W_t\}$.

\begin{Remark}
It is not difficult to perceive that the transformed series $\{W_t\}$ may be correlated, such that the decorrelation step is beneficial and necessary. One way to carry out this step is by fitting an AR(p) model on the $\{W_t\}$ series. Then, we record the residuals of the AR fit as $\{\hat{\epsilon}_t\}$. Also, we can approximate the one-step-ahead value $\widehat{W}_{T+1}$ with a fitted AR model. Then, we can create the new series as $\{\epsilon_{t} + \widehat{W}_{T+1} \}$. We use the empirical distribution of this new series to approximate the distribution of $W_{T+1}$. \cref{Eq:pre} can be used with the optimal prediction $\widehat{W}_{T+1}$ derived from this empirical distribution. It is still an open question, however, how to extend this decorrelation step to multi-step-ahead predictions. 
\end{Remark}

\section{Simulations}\label{Sec:Simulations}
In this section, we deploy several simulations to check the performance of GARCH, GARCHX and GAX-NoVaS methods. Before presenting the data-generating model used to do simulations, we explain the procedure of the moving-window time-aggregated predictions and give the model evaluation metrics to measure the performance of different methods. 

\subsection{Moving-window time-aggregated prediction}\label{Subsec:timeaggregatedmetric}
If we have sample $\{Y_1,\ldots,Y_{N}\}$ at hand, in order to fully exhaust the dataset, we can focus on moving-window out-of-sample predictions, i.e., we use $\{Y_1,\cdots,Y_{T}\}$ to predict \{$Y_{T+1}^2,\cdots,Y_{T+h}^2\}$, then we use $\{Y_2,\cdots,Y_{T+1}\}$ to predict $\{Y_{T+2}^2,\cdots,Y_{T+h+1}^2\}$, and so on until we reach the end of the sample (that is, until we use  $\{Y_{N-T+h+1},\cdots,Y_{N-h}\}$ to predict $\{Y^2_{N-h+1},\ldots, Y^2_{N}\}$). Here, $T$ denotes the moving window size; we fix its size as 250; $h$ is the prediction horizon, i.e., 1, 5, 20 in our setting. Sometimes, we may not have enough data available to perform predictions. Thus, the window size $T = 250$ is designed to see if this method is stable and can still return accurate predictions even with short data. A 250-size moving window is in line with around one year of daily financial data. In this perspective, it is important to keep in mind that a 250-size moving window is practically meaningful since the time series may not be stationary on a wider span. 

In addition, we are not only interested in the one-step-ahead prediction $h=1$ but also multi-step-ahead prediction $h>1$. From a practical aspect of forecasting volatility, as mentioned above in our introduction Section \ref{sec:introduction}, the long-term prediction ($h$ takes a large value) is important and can guide future strategic decisions. Before this evaluation, we start by writing time-aggregated predictions as follows:

\begin{equation}\label{Eq:timeaggregated}
    \overline{\widehat{Y}}_{T,h}^2 = \sum_{k=1}^h \widehat{Y}_{T+k}^2/h,
\end{equation}
where $\overline{\widehat{Y}}_{T,h}^2$ is the $h$-step ahead time-aggregated volatility prediction starting from $Y_T$. For example, if the total number of data $N = 1000$ and we consider the 6-step-ahead moving window time-aggregated predictions with $T = 500,$ we need to find predictions $\overline{\widehat{Y}}_{T,6}^2$ for $T = 500,\ldots,994$.
  
We hope the time-aggregated volatility prediction is close to the true aggregated value calculated from the realized average squared log-returns $\overline{Y}_{l,h}^2 = \sum_{k=1}^h(Y_{T+k}^2/h)$. To evaluate the accuracy, we can consider the specific mean of squared prediction errors (MSPE) shown below, with this statistic aiming to compare the prediction performance in an absolute way:
\begin{equation}\label{Eq:SSPE}
    P = \frac{1}{N-h-T+1}\sum_{l= T}^{N-h}(\overline{\widehat{Y}}_{l,h}^2-\overline{Y}_{l,h}^2)^2,
\end{equation}
where $\overline{\widehat{Y}}_{l,h}^2$ and $\overline{Y}_{l,h}^2$ denote the predicted and true time-aggregated values for each moving-window forecasting, respectively. 

\subsection{Simulation setting}
In this part, we present three data-generating models to simulate data and then evaluate the performance of various methods considered in this paper. Due to the true data-generating process being known to us, we can simulate any size of the sample and compare predictions from various methods with oracle values. Besides, we remove the variance term $\beta s^2_{T-1,X}$ of \cref{Eq:transfom} when we do the transformations since the prediction performance hardly changes with or without $\beta s^2_{T-1,X}$. Three true underlying models are presented below:

\begin{itemize}
\item[] \textbf{Model 1:} Standard GARCH(1,1) with Student-$t$ errors\\
$Y_t = \sigma_t\epsilon_t,$ $~\sigma_t^2 = 0.00001 + 0.73\sigma_{t-1}^2+0.1Y_{t-1}^2 + c|X_{t-1}|,$\\ $X_{t-1} \sim i.i.d.~N(0,1)$;$~\{\epsilon_t\}\sim i.i.d.~t$ $\text{distribution with four degrees of freedom}$\\
$c = 1$.

\item[] \textbf{Model 2:} Standard GARCH(1,1) with Student-$t$ errors\\
$Y_t = \sigma_t\epsilon_t,$ $~\sigma_t^2 = 0.00001 + 0.8895\sigma_{t-1}^2+0.1Y_{t-1}^2 + c|X_{t-1}|,$\\ $X_{t-1} \sim i.i.d.~N(0,1)$;$~\{\epsilon_t\}\sim i.i.d.~t$ $\text{distribution with four degrees of freedom}$\\
$c = 1$.

\item[] \textbf{Model 3:} Time-varying GARCHX(1,1) with standard normal exogenous variables:\\
    $Y_t = \sigma_t\epsilon_t,~\sigma_t^2 =  b_{t}\sigma_{t-1}^2+a_{t}Y_{t-1}^2 + c|X_{t-1}|,$\\
$X_{t-1} \sim i.i.d.~N(0,1)$;~$\{\epsilon_t\} \sim i.i.d.$ $t$ distribution with five degrees freedom;\\
$c = 1; g_t = t/n$; $a_{t} = 0.1 - 0.05g_t$; $b_{t} = 0.7 + 0.2g_t,~n$~is the total length of the time series.
\end{itemize}
To check the robustness of methods on model misspecification, we intend to take the innovation distribution of simulation models as the $t$-distribution. Besides this purpose, we argue that the $t$ distribution as the innovation to mimic the real-world cases is more appropriate since real data usually show the heavy tail phenomenon. Models 1 and 2 are from a standard GARCH where in Model 2 we intended to explore a scenario that $\alpha_1 + \beta_1$ is very close to 1 and thus mimics what would happen for the iGARCH situation. In addition, we make the coefficients of the GARCHX model change linearly in Mode-3 so that we can observe the ability of different methods to handle the data generated from a time-varying model which is more coherent to the real-world situation. To sync with the empirical studies later, we simulate a time series with a length $T = 4694$. We take the moving-window size $T = 250$. MSPE of GARCH, GARCHX and GAX-NoVaS methods with three simulation settings are presented below:
\begin{table}[htbp]
  \caption{MSPE ratios of different methods on three simulated datasets.}
  \resizebox{\textwidth}{!}{
    \begin{tabular}{lccccccccc}
    \toprule
          & \multicolumn{3}{c}{\textbf{Model-1}} & \multicolumn{3}{c}{\textbf{Model-2}} & \multicolumn{3}{c}{\textbf{Model-3}} \\
    \hline
    \textbf{Prediction steps} & 1     & 5     & 20    & 1     & 5     & 20 & 1     & 5     & 20 \\
  \textbf{GARCH} & 1.000 & 1.000 & 1.000 & 1.000 & 1.000 & 1.000 & 1.000 & 1.000 & 1.000 \\ 
  \textbf{GARCHX} & 0.985 & 1.007 & 0.854  & 1.014 & 0.957 & 0.701 & 0.921 & 0.956 & 2.680\\ 
  \textbf{GAX-NoVaS} & 0.842 & 0.566 & 0.222 & 0.798 & 0.818 & 0.085 & 0.870 & 1.276 & 0.006\\ 
    \bottomrule
    \end{tabular}}%
    
    {\raggedright
    \noindent{\footnotesize{Note: To simplify the presentation, we compute the ratio of MSPE of different methods, i.e., we divide all MSPE for each prediction horizon and models by the MSPE of the GARCH (benchmark) method. In addition, $0.000$ value in the above table is a rounding number.}} }
  \label{Table:simulationresults}%
\end{table}%

From \cref{Table:simulationresults}, it is clear that the GAX-NoVaS is much better than GARCH and even the classical GARCHX models according to the MSPE criterion. Interestingly, the GARCHX model is even much worse than the GARCH model for some specific cases, e.g., the 20-step-ahead prediction of Model-3. By taking a deeper analysis, we find the terrible performance GARCHX method is due to some extremely large predictions. On the other hand, the prediction returned by GAX-NoVaS is more stable. The superiority is further shown in \cref{Sec:empiricalstudies} with three real datasets and various exogenous predictors. 

Since the focus of this paper is exploring a new approach to incorporate exogenous variables, we take the DM test to evaluate the performance of GARCHX and GAX-NoVaS methods more formally; see \cite{diebold2002comparing} for the technical details of the DM-test\footnote{We perform the DM-test with the function \textit{DM-test} in the \textit{R} package \textit{multDM}.}. The DM-test results on comparing GARCHX and GAX-NoVaS for forecasting three simulated datasets are tabularized in \cref{Table:DM_test_short}. These tests further verify the advantage of our methods on forecasting with exogenous variables, especially for a long-prediction horizon. 

\begin{table}[htbp]
  \centering
  \caption{DM-test results on simulated datasets.}
    \resizebox{\textwidth}{!}{
    \begin{tabular}{lccccccccc}
    \toprule
          & \multicolumn{3}{c}{\textbf{Model-1}} & \multicolumn{3}{c}{\textbf{Model-2}} & \multicolumn{3}{c}{\textbf{Model-3}} \\[2pt]
    \hline
    \textbf{Prediction steps} & 1     & 5     & 20    & 1     & 5     & 20 & 1     & 5     & 20 \\
    \textbf{GARCHX} &       &       &       &       &       &       &       &       &  \\
    \textbf{GAX-NoVaS} & 0.009 & 0.020 & 0.000 & 0.038 & 0.107 & 0.000 & 0.172 & 1.000 & 0.102  \\
    \bottomrule
    \end{tabular}}%
\raggedright

\noindent{\footnotesize{Note: The values in the different rows are one-sided $p$-values of the DM-test on the prediction error of GAX-NoVaS and GARCHX. For the DM-test here, the alternative hypothesis is that the former method GAX-NoVaS is more accurate than the latter one, GARCHX. }} 

  \label{Table:DM_test_short}%
\end{table}%
\FloatBarrier

\section{Empirical analyses with real data}\label{Sec:empiricalstudies}
A summarizing note of our findings in \cref{Sec:Simulations} reads that the GAX-NoVaS method performs better than standard GARCH-type methods, especially for long-term time aggregated predictions. In this section, we deploy an interesting data analysis to exhibit how our method could shed light on the role of geopolitical risks in
forecasting volatility with real-world data. We start by describing the data below. 
\subsection{Data description}\label{Subsec:datadescription}
The ongoing Ukraine-Russia has led many countries in Europe, particularly Germany, to adjust their military and security, as well as energy policies in light of new geopolitical risks, with such adjustments entailing large costs to the macroeconomy and financial markets, as depicted by \cite{grebe2024}. In this regard, these authors, first, assemble a data set of more than eight million German Twitter posts related to the war in Ukraine to construct a daily index of uncertainty about the war as perceived by German Twitter based on using state-of-the-art methods of textual analysis. \cite{grebe2024} show that an increase in uncertainty has strong effects on financial markets, associated with a significant decline in economic activity as well as an increase in expected inflation. We utilize this index (Ukraine)\footnote{The data is available for download from: \url{https://www.uni-giessen.de/de/fbz/fb02/fb/professuren/vwl/tillmann/forschung/ukraine-uncertainty-index}.} in our empirical analysis to forecast stock market volatility of not only Germany, but two of its neighbors namely, Austria and Switzerland, over the daily period of 1st January, 2021 to 28th February, 2023. The national stock market indexes (ATX (Austria), DAX (Germany), SMI (Switzerland)) of these three countries, for which we compute log-returns to feed into our volatility models were derived from the Bloomberg terminal. With the focus being on geopolitical risks, we also utilized the daily newspapers-based geopolitical risks index (GPRD) of \cite{caldara2022}\footnote{The data can be accessed from: \url{https://www.matteoiacoviello.com/gpr.htm}.}, which, in turn, allowed us to analyze a longer data sample covering 2nd January, 2006 to 10th August, 2023. The starting date of this longer sample, and the choice of these three countries, were also motivated by the availability of Google searches-based daily data on economic activity (Trend) for all three countries, and inflation (Inflation) for Germany and Switzerland,\footnote{The data can be downloaded from: \url{https://www.trendecon.org/}.} which are used as additional predictors to ensure that our results are not only limited to geopolitical risks.

We present three log-return series of Germany, Switzerland and Austria from 2nd January, 2006 to 10th August, 2023 in \cref{fig:Germanylong,fig:Switzerlandlong,fig:Austrialong}. The volatility clustering phenomena observed in all plots reveals the heteroskedasticity within these three series. To investigate the property of three long return series, we provide the summary statistics, e.g., the mean, skewness, and kurtosis in \cref{Table:summarystatistics}. To verify the heteroskedasticity with all series more directly, we split the whole time period into four equal-length sub-periods and denote the sample variance of all four sub-periods by $V_i$, $i = 1,\ldots, 4$. These statistics are also provided in \cref{Table:summarystatistics}. Towards statistical tests, we also perform modified Ljung-Box (m-LB) and ARCH Lagrange Multiplier (ALM) tests to check the autocorrelation and ARCH effects of squared return series. For the m-LB test, we consider the lag order 20. For the ALM test, we consider the maximum lag order 10. These two tests are performed in \textit{R} with functions \textit{lbtest} and \textit{Lm.test}, respectively. The p-values of tests are presented. Summarizing \cref{Table:summarystatistics}, the large kurtosis values indicate the heavy-tailed property for all three log-return series. The variance of return series in different time regions changes notably, indicating the heteroskedasticity. The ALM test with a pretty small p-value also confirms the heteroskedasticity for all return series. The m-LB test shows strong evidence of autocorrelation within all squared return series.
\begin{table}[htbp]
  \centering
  \caption{Summary statistics of three long return series.}
    \begin{tabular}{lccccccccc}
    \toprule
     Returns series  & Mean &  Skew. &  Kurt. & $V_1$ & $V_2$ & $V_3$ & $V_4$  & m-LB    &  ALM  \\[2pt]
    \hline
    \textbf{Germany} &  0.01 & -0.24 &  11.37& 2.51 & 1.52 &1.22  & 1.79 & 0.00 & 0.00    \\
    \textbf{Switzerland} & 0.01  & -0.43 & 12.59 & 1.83 & 0.83 & 0.88 & 0.95 & 0.00 & 0.00   \\
    \textbf{Austria} & -0.01  & -0.41 & 10.26 &  4.78 & 2.15 & 1.81 &  3.27 &  0.00& 0.00 \\
    \bottomrule
    \end{tabular}%
\raggedright

\noindent{\footnotesize{Note: columns $V_i$, $i = 1,\ldots, 4$ represent the sample variance of each long series on four equal-length sub-periods splitted from the whole period 01/02/2006 to 08/10/2023. The column ALM represents the p-value of the ALM test with the maximum lag order being 10; the column m-LB represents the p-value of the m-LB test at the lag order being 20; 0.00 indicates the p-value is less than $3\times 10^{-16}$ for these two tests. Skew. and Kurt. represent the skewness and kurtosis respectively. }} 

  \label{Table:summarystatistics}%
\end{table}%

To show the fluctuations behind the Ukraine index and GPRD index, we present two plots in \cref{fig:Ukraineplot,fig:GPRDplot} below. As one can see from there, these two indices fluctuate severely around the beginning of 2022 which corresponds with the real-world event. Later, we attempt to use this information to forecast the stock market volatility of three countries.

\begin{figure}
    \centering
    \includegraphics[scale = 0.65]{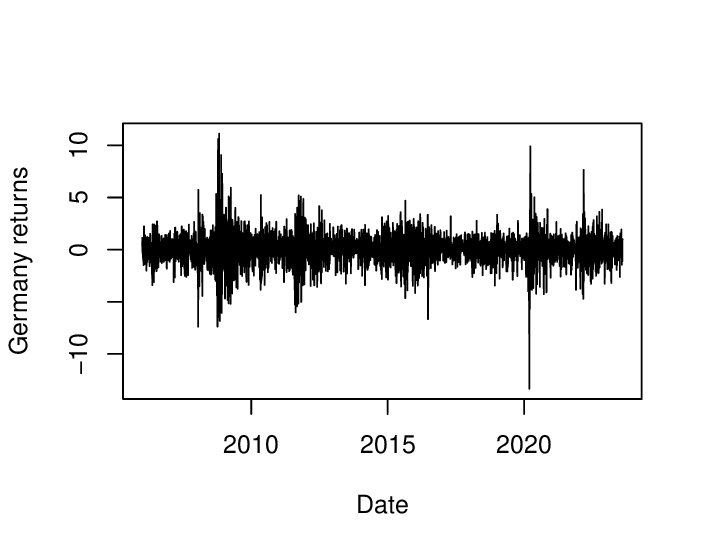}
    \caption{The log-returns of Germany's national stock market index from 01/02/2006 to 08/10/2023.}
    \label{fig:Germanylong}
\end{figure}

\begin{figure}
    \centering
    \includegraphics[scale = 0.65]{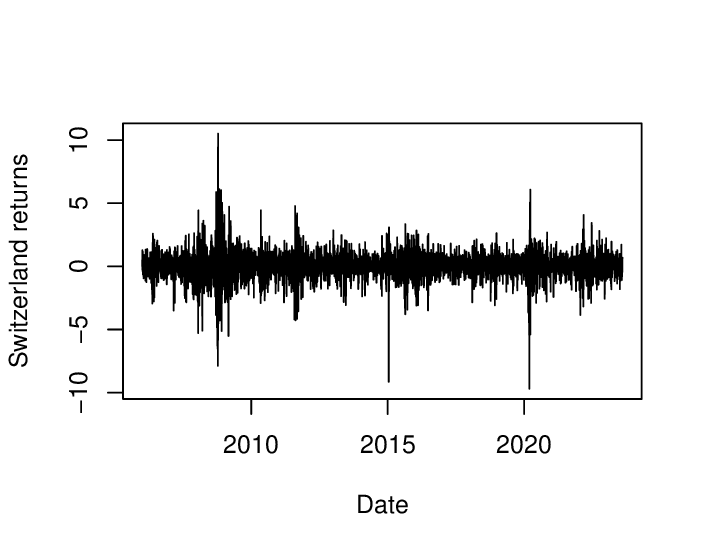}
    \caption{The log-returns of Switzerland's national stock market index from 01/02/2006 to 08/10/2023.}
    \label{fig:Switzerlandlong}
\end{figure}

\begin{figure}
    \centering
    \includegraphics[scale = 0.65]{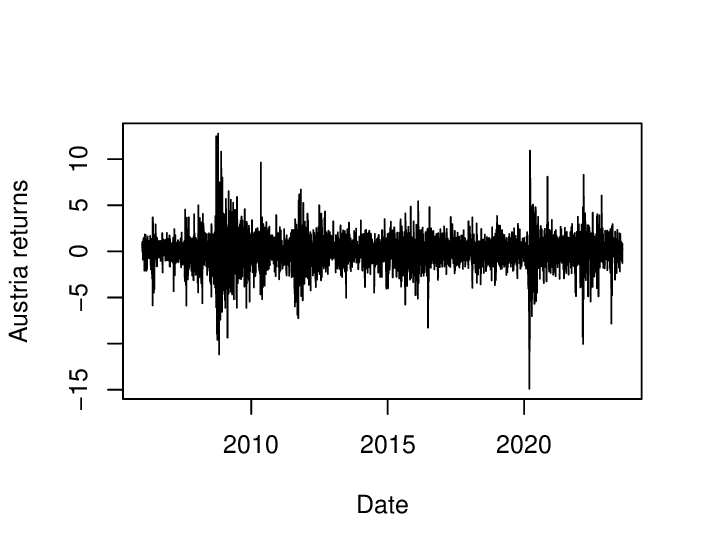}
    \caption{The log-returns of Austria's national stock market index from 01/02/2006 to 08/10/2023.}
    \label{fig:Austrialong}
\end{figure}

\begin{figure}
    \centering
    \includegraphics[scale = 0.65]{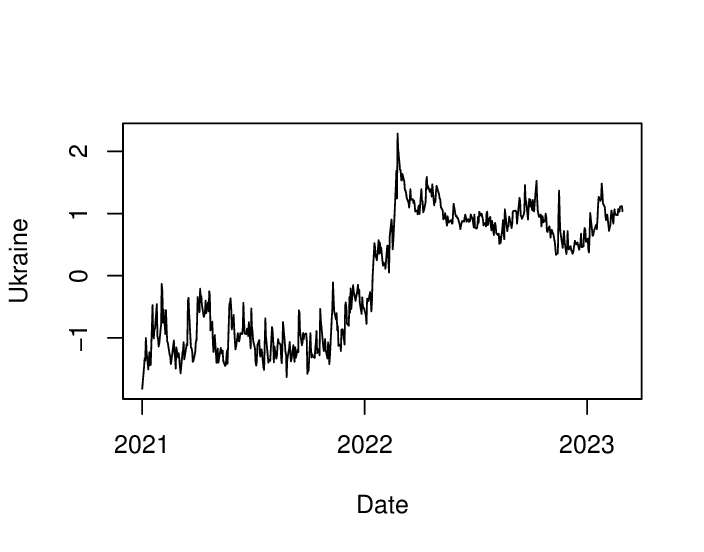}
    \caption{The exogenous variable Ukraine index from 01/01/2021 to 28/02/2023. This is corresponding to the time period of our short data .}
    \label{fig:Ukraineplot}
\end{figure}

\begin{figure}
    \centering
    \includegraphics[scale = 0.65]{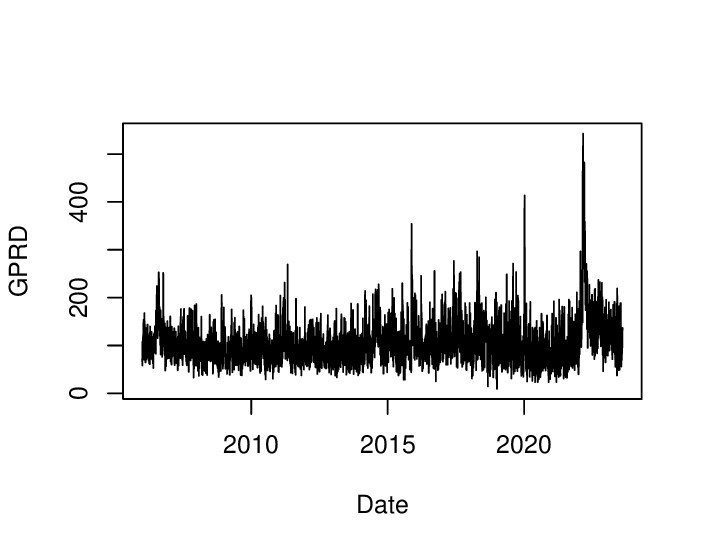}
    \caption{The exogenous variable GPRD index from 01/02/2006 to 08/10/2023.}
    \label{fig:GPRDplot}
\end{figure}

\FloatBarrier
\subsection{Empirical results}
We first consider the forecasting exercise with the short period (1st January,
2021 to 28th February, 2023) data described in \cref{Subsec:datadescription}. Then, the analysis of three long returns series is given in \cref{Subsubsec:long}. 
\subsubsection{Short data}
To compare the performance of GARCH, GARCHX and GAX-NoVaS methods, we still apply the time aggregated prediction metric described in \cref{Subsec:timeaggregatedmetric}. We consider $h = 1,5, 20$ and use a 250-size moving window, which is about 1 year of daily data. We start the empirical analysis with the short period of data and we take the Ukraine index as the exogenous predictor. The MSPE results are summarized in \cref{Table:empiricalre_shortMSPE}.

\begin{table}[htbp]
  \centering
  \caption{MSPE ratios of different methods on three short datasets.}
    \resizebox{\textwidth}{!}{
    \begin{tabular}{lccccccccc}
    \toprule
          & \multicolumn{3}{c}{\textbf{S-Germany}} & \multicolumn{3}{c}{\textbf{S-Austria}} & \multicolumn{3}{c}{\textbf{S-Switzerland}} \\[2pt]
    \hline
       \textbf{Prediction steps} & 1     & 5     & 20    & 1     & 5     & 20 & 1     & 5     & 20 \\
    \textbf{GA} & 1.000 & 1.000 & 1.000 & 1.000 & 1.000 & 1.000 & 1.000 & 1.000 & 1.000\\
    \textbf{GAX-Ukraine} & 1.011 & 1.023 & 0.989 & 1.002 & 1.003 & 0.994 & 1.003 & 1.008 & 1.005\\
    \textbf{GAX-NoVaS-Ukraine} & 1.038 & 1.141 & 0.898 & 1.017 & 0.984 & 0.822 & 1.010 & 1.100 & 0.879 \\
    \bottomrule
    \end{tabular}}%
    \raggedright

\noindent{\footnotesize{Note: ``S'' represents ``short''. To simplify the presentation, we compute the ratio of MSPE of different methods, i.e., we divide all MSPE for each prediction horizon and model by the MSPE of the GARCH (benchmark) method.}} 
  \label{Table:empiricalre_shortMSPE}%
\end{table}%
From \cref{Table:empiricalre_shortMSPE}, we can see that the GAX-NoVaS method can bring some large improvements, especially for long-horizon aggregated predictions. Meanwhile, it seems that the GARCHX and GARCH models have indistinguishable performance. However, the GAX-NoVaS method is generally better than both GARCH-type methods. The DM-test results on comparing GARCHX and GAX-NoVaS for forecasting short real-world data are tabularized in \cref{Table:DM_test_short}, which reveals the significant advantage of GAX-NoVaS for long-horizon predictions, especially for the 20-step-ahead predictions of Short Austria data.

\begin{table}[htbp]
  \centering
  \caption{DM-test results on predictions of short datasets.}
    \resizebox{\textwidth}{!}{
    \begin{tabular}{lccccccccc}
    \toprule
          & \multicolumn{3}{c}{\textbf{S-Germany}} & \multicolumn{3}{c}{\textbf{S-Austria}} & \multicolumn{3}{c}{\textbf{S-Switzerland}} \\[2pt]
    \hline
    \textbf{Prediction steps} & 1     & 5     & 20    & 1     & 5     & 20 & 1     & 5     & 20 \\
    \textbf{GARCHX-Ukraine} &       &       &       &       &       &       &       &       &  \\
    \textbf{GAX-NoVaS-Ukraine} & 0.616 & 0.727 & 0.306 & 0.578 & 0.410 & 0.030 & 0.549 & 0.860 & 0.118 \\
    \bottomrule
    \end{tabular}}%
\raggedright

\noindent{\footnotesize{Note: The values in the different rows are one-sided $p$-values of the DM-test on the prediction error of GAX-NoVaS and GARCHX. For the DM-test here, the alternative hypothesis is that the former method GAX-NoVaS is more accurate than the latter one GARCHX. }} 

  \label{Table:DM_test_short}%
\end{table}%

\subsubsection{Long data}\label{Subsubsec:long}
We continue our real data analysis with long datasets (2nd January, 2006 to 10th August, 2023). We also apply more exogenous variables. Similar to the analysis procedure for short data, we present MSPE ratios and corresponding DM-test results of GARCHX and GAX-NoVaS in \cref{Table:empiricalre_longMSPE,Table:DM_test_long}. Generally speaking, the GAX-NoVaS method still dominates the other two GARCH-type methods, and this superiority is verified to be significant by the DM-test. 
\begin{table}[htbp]
  \centering
  \caption{MSPE ratios of different methods on three long datasets.}
    \resizebox{\textwidth}{!}{
    \begin{tabular}{lccccccccc}
    \toprule
          & \multicolumn{3}{c}{\textbf{L-Germany}} & \multicolumn{3}{c}{\textbf{L-Austria}} & \multicolumn{3}{c}{\textbf{L-Switzerland}} \\[2pt]
    \hline
    \textbf{Prediction steps} & 1     & 5     & 20    & 1     & 5     & 20    & 1     & 5     & 20 \\
    \textbf{GA} & 1.000 & 1.000 & 1.000 & 1.000 & 1.000 & 1.000 & 1.000 & 1.000 & 1.000 \\
    \textbf{GA-NoVaS} & 0.990 & 0.922 & 0.518 & 1.062 & 1.041 & 0.715 & 1.010 & 0.786 & 0.128 \\
    \textbf{GAX-GPRD} & 1.010 & 1.024 & 0.998 & 0.999 & 0.990 & 0.964 & 1.003 & 0.994 & 0.797 \\
    \textbf{GAX-NoVaS-GPRD} & 1.000 & 0.953 & 0.557 & 1.076 & 1.121 & 0.831 & 1.005 & 0.799 & 0.133 \\
    \textbf{GAX-Inflation} & 1.003 & 1.016 & 0.974 &  &  &  & 0.996 & 0.965 & 0.858 \\
    \textbf{GAX-NoVaS-Inflation} & 1.001 & 0.949 & 0.551 &  &  &  & 1.010 & 0.801 & 0.131\\
    \textbf{GAX-Trend} & 0.987 & 0.979 & 1.062 & 0.982 & 0.912 & 0.654 & 0.996 & 0.937 & 0.594 \\
    \textbf{GAX-NoVaS-Trend} & 0.998 & 0.949 & 0.540 & 1.077 & 1.169 & 0.866 & 1.019 & 0.828 & 0.138  \\
   \bottomrule
    \end{tabular}}%
    \raggedright

\noindent{\footnotesize{Note: ``L'' represents ``long''. To simplify the presentation, we compute the ratio of MSPE of different methods, i.e., we divide all MSPE for each prediction horizon and model by the MSPE of the GARCH (benchmark) method.}} 
  \label{Table:empiricalre_longMSPE}%
\end{table}%

\begin{table}[htbp]
  \centering
  \caption{DM-test results on predictions of long datasets.}
    \resizebox{\textwidth}{!}{
    \begin{tabular}{lccccccccc}
    \toprule
          & \multicolumn{3}{c}{\textbf{L-Germany}} & \multicolumn{3}{c}{\textbf{L-Austria}} & \multicolumn{3}{c}{\textbf{L-Switzerland}} \\
                  \hline
    \textbf{Prediction steps} & 1     & 5     & 20    & 1     & 5     & 20    & 1     & 5     & 20 \\
    \textbf{GARCHX-GPRD} &       &       &       &       &       &       &       &       &  \\
    \textbf{GAX-NoVaS-GPRD} & 0.427 & 0.306 & 0.012 & 0.950 & 0.957 & 0.068 & 0.516 & 0.055 & 0.008 \\
    \textbf{GARCHX-Inflation} &       &       &       &       &       &       &       &       &  \\
    \textbf{GAX-NoVaS-Inflation} & 0.484 & 0.298 & 0.007 &       &       &       & 0.595 & 0.082 & 0.029 \\
    \textbf{GARCHX-Trend} &       &       &       &       &       &       &       &       &  \\
    \textbf{GAX-NoVaS-Trend} & 0.606 & 0.420 & 0.017 & 0.995 & 1.000 & 1.000 & 0.659 & 0.122 & 0.027 \\
    \bottomrule
    \end{tabular}}%
\raggedright

\noindent{\footnotesize{Note: The values in the different rows are one-sided $p$-values of the DM-test on the prediction error of GAX-NoVaS and GARCHX. For the DM-test here, the alternative hypothesis is that the former method GAX-NoVaS is more accurate than the latter one GARCHX. }} 

  \label{Table:DM_test_long}%
\end{table}%

\FloatBarrier
\section{Conclusion}\label{Sec:Conclusion}

We extend the current NoVaS prediction method to the realm of prediction with exogenous variables. We provide the theoretical foundation to guarantee the feasibility of applying Model-free prediction. Inspired by the GARCHX model, we propose a specifically designed model-free/model-based method namely GAX-NoVaS prediction. The dominance of GAX-NoVaS on the classical GAX-NoVaS method is verified by simulation and empirical datasets. Also, such an advantage is not only exhibited by some MSPE metric but we also show some statistical significance through the parlance of classical DM tests. 

We should also mention that going far beyond GAX-NoVaS method might have limitations if the model becomes increasingly complex. Recall that the great performance of the GAX-NoVaS method relies on a successful transformation, the satisfied transformation may not be achievable if the underlying time series is very complicated. However, there is a growing literature on forecasting using more non-parametric neural network based models. It will be an interesting future work that combines the idea of model-free prediction with the state-of-the-art machine learning method, such as Deep neural network (DNN), convolutional neural network (CNN) or LSTM etc. Finally, in the field of binary/categorical/count data INGARCH models have recently garnered significant attention both from theoretical and applied researchers. One could potentially also think of a Model-free INGARCH-X type model and try to integrate the idea of exogenous covariates into it and build a new forecasting framework to challenge the existing ones. In short, our paper remains the first paper to propose this idea of model-free transformation-based forecasting focused on GARCHX type models but the scope of extending this to several directions is ample.

\clearpage

\clearpage
\bibliographystyle{apalike}
\bibliography{refs}
\end{document}